\pdfoutput=1
\documentclass[
onefignum,onetabnum]{siamart251216}

\usepackage[utf8]{inputenc}
\usepackage[T1]{fontenc}
\usepackage{booktabs}
\usepackage{amsfonts}
\usepackage{amssymb}
\usepackage{nicefrac}
\usepackage{microtype}
\usepackage{mathtools}
\usepackage{siunitx}
\usepackage{graphicx}
\title{
Well-posed by Design: Learning Constitutive Laws from Velocity Data using Convex Neural Network Potentials
\thanks{Version from \today.%
\funding{The authors were supported by the Multidisciplinary University Research Initiatives (MURI) Program, Office of Naval Research (ONR) grant \#N00014-19-1-2421, and by ONR \#N00014-26-1-2101.
G.S.~also appreciates support from the US National Science Foundation (NSF) under \#2343866, \#2411229 and \#2411349.}}}

\DeclareMathOperator{\tr}{tr}
\DeclareMathOperator{\dev}{dev}
\DeclareMathOperator*{\argmin}{arg\,min}

\newcommand{\RRp}{\mathbb{R}_{\geq 0}}

\newcommand{\bA}{\boldsymbol{\mathrm{A}}}

\newcommand{\bD}{\boldsymbol{\mathrm{D}}}
\newcommand{\bI}{\boldsymbol{\mathrm{I}}}
\newcommand{\bS}{\boldsymbol{\mathrm{S}}}
\newcommand{\bO}{\boldsymbol{\mathrm{O}}}

\newcommand{\bsigma}{\boldsymbol{\sigma}}
\newcommand{\btau}{\boldsymbol{\tau}}
\newcommand{\bi}{\boldsymbol{\iota}}

\newcommand{\bx}{\boldsymbol{x}}

\newcommand{\dd}{\mathrm{d}}
\newcommand{\bu}{\boldsymbol{u}}
\newcommand{\bv}{\boldsymbol{v}}
\newcommand{\bzero}{\boldsymbol{0}}

\newcommand{\RR}{\mathbb{R}}
\newcommand{\Sbb}{\mathbb{S}}
\newcommand{\Obb}{\mathbb{O}}

\newcommand{\bff}{\boldsymbol{f}}

\newcommand{\Cc}{\mathcal{C}}

\newcommand{\Fc}{\mathcal{F}}

\newcommand{\Jc}{\mathcal{J}}

\newcommand{\Lc}{\mathcal{L}}

\newsiamremark{remark}{Remark}

\author{Gonzalo G.~de Diego\thanks{Courant Institute School of Mathematics, Computing and Data Science, New York University, USA, and E.T.S.I.~Telecomunicaci\'on, Universidad Polit\'ecnica de Madrid, Spain (\email{gonzalo.gonzalez@courant.nyu.edu}, \email{g.gonzalezd@upm.es}).}
  \and Georg Stadler\thanks{Courant Institute School of Mathematics, Computing and Data Science, New York University, New York, NY, USA (\email{stadler@cims.nyu.edu}).}}

\headers{Well-posed by design: Learning constitutive laws}{G. G. de Diego and G. Stadler}

\ifpdf
\hypersetup{
  pdftitle={Well-posed by Design: Learning Constitutive Laws from Velocity Data using Convex Neural Network Potentials},
  pdfauthor={G. G. de Diego and G. Stadler}
}
\fi

\begin{document}

\maketitle

\begin{abstract}
Learning constitutive laws of complex fluids from velocity data (i.e.~indirect observations) is a PDE-constrained inverse problem in which expressive neural parameterizations risk breaking the well-posedness of the forward physics model. We address this tension by learning, rather than the constitutive law itself, the dissipation potential: a scalar function whose convexity, frame-indifference, and dissipativity propagate into the underlying continuum mechanics and guarantee the structural properties needed for a well-posed and generalizable forward PDE. We introduce ICNNE, an input-convex neural architecture that exactly enforces convexity, frame-indifference, and evenness in the second strain-rate invariant via symmetrization, alleviating the numerical instabilities that occur at small strain rates in standard formulations. By weakly enforcing zero gradients in the origin of the potential via penalization, ICNNE also satisfies the dissipativity property. The loss objective and its gradient are computed using a combination of finite element and neural network methods, coupling the Firedrake and PyTorch libraries. We evaluate on four problems spanning compressible and incompressible regimes: compressible Navier–Stokes, Herschel–Bulkley yield-stress flow, Hibler's viscous-plastic sea-ice model, and discrete-element-method data with no known constitutive law. We show that the learned potentials (i) recover ground-truth physics where it is known, (ii) transfer accurately to geometries unseen during training, and (iii) succeed in regimes where unstructured methods diverge. These results suggest that embedding mathematical structure into the parameterization, rather than into the loss, is a robust path to learning physics from indirect data.
\end{abstract}

\begin{keywords}
constitutive law learning, dissipation potential, input-convex neural networks, PDE-constrained optimization, complex fluids, finite elements
\end{keywords}

\begin{MSCcodes}
76A05, 35R30, 65N21, 65N30, 68T07
\end{MSCcodes}

\section{Introduction}

Constitutive laws that relate internal stress to deformation rates characterize the behavior of a wide range of complex fluids, such as cement flow \cite{sant2008}, lava \cite{griffiths2000}, Arctic sea-ice floes \cite{feltham2008}, and blood \cite{baskurt2003}. Multiple machine learning (ML) approaches exist for deriving constitutive laws from data. These approaches can be classified according to two key principles: 1) what data is used, and 2) what functional form is assumed for the constitutive law. Stress and strain-rate data, often known as direct data, are not directly observable in most cases, and one must typically rely on displacement or velocity data (i.e.\ indirect data). A common practice is to use a predefined functional form for the constitutive law (e.g., a power law) and fit a small number of scalar parameters (e.g., the coefficient and exponent of the power law) to reproduce the data. However, without a-priori knowledge, we must use a more expressive family of functions, such as polynomials or neural networks (NNs), to explain the data, yielding a function inversion  problem.

Here, we propose a \emph{function inversion} framework, summarized in \cref{fig:summary}, to infer constitutive laws for complex fluids from sparse \emph{velocity} data.
Because velocities depend indirectly on the constitutive law, they can inform the constitutive relation only when combined with the partial differential equation (PDE) that enforces the physics governing fluid motion.
An unrestricted NN parameterization of the constitutive law may break the well-posedness (i.e.,~solvability) of the underlying PDE. Therefore, to incorporate the properties required for the PDE to be well-posed, we refrain from directly learning the effective viscosity that characterizes a fluid's constitutive law. Instead, we learn the convex dissipation potential whose derivatives yield the viscosity.
We parameterize this potential with input-convex NNs while also enforcing key mathematical properties, including frame-indifference, convexity, and dissipativity. Consequently, the resulting fluid equations of motion are \emph{well-posed by design}, which is crucial for the robustness of the PDE-constrained optimization problem solved during training. From a computational standpoint, we embed the equations of motion into the training loop by coupling machine learning (ML) and finite element (FE) libraries.

\begin{figure}
    \includegraphics[width=\linewidth, trim={0 0.25cm 0 0}, clip]{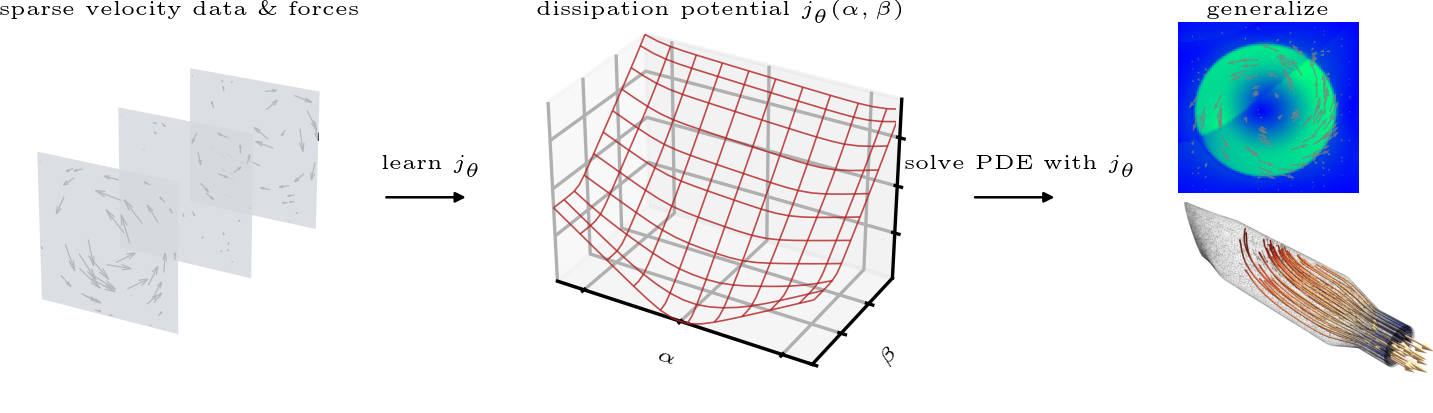}
    \caption{Summary of our approach to infer a constitutive law for a complex fluid. Using a single or multiple snapshots of sparse velocity data, we learn a dissipation potential $j_\theta$ parameterized as input-convex NNs that characterizes the constitutive relation. This relation can then be transferred to different problem settings, including problem setups with different spatial dimensions.}
    \label{fig:summary}
\end{figure}

\subsection{Related work}\label{subsec:related_work}
When learning complex fluid constitutive laws with \emph{predefined functional expressions} containing unknown parameters, current approaches either employ gradient-based optimization with the governing equations imposed as constraints \cite{akerson2025, hu2024,kontogiannis2025}, or treat these equations as penalized residuals in physics-informed neural networks (PiNNs)
\cite{lardy2025, thakur2024}. Yet, unless the material behavior is accurately characterized in advance, such predefined forms can introduce inference bias.
To reduce this bias, in \cite{saadat2022, mahmoud2024}, constitutive relations are chosen from a function library using sparsification to promote simple models.

An alternative approach, which is also our approach, is to employ NN-based parameterizations of the constitutive law.
In \cite{xu2021,lennon2023}, for instance, the focus is on learning viscoelastic constitutive laws: in \cite{xu2021}, this is done using velocity data but without enforcing key physical principles; in \cite{lennon2023}, the authors use experimental rheometer stress (rather than velocity) data.
In \cite{parolini2025}, input-convex NNs are used to parametrize the effective viscosity and learn power-law flows from stress data. While convexity, when imposed on the viscosity, restricts the type of constitutive laws, it guarantees well-posedness of the governing PDE. A class of monotone constitutive laws built with an NN are fitted to both stress and velocity data in \cite{sunol2025}.
In \cite{dediego2026}, an unknown constitutive law is inferred that reproduces synthetic sea ice Lagrangian floe velocity data in a one-dimensional setup; the same method is also used for learning incompressible fluid laws in \cite{dediego2026b}.
As in this present work, \cite{geuken2025, RajCaoStuartEtAl26} use input-convex NNs to learn \emph{potentials} in the context of elasticity (rather than fluids) and from direct, i.e.~stress-displacement, data (rather than indirect, i.e.~displacement, data).

\subsection{Contributions and limitations}\label{subsec:contributions}
Our main \emph{contributions} are:
(1) We propose learning a \emph{convex dissipation potential} that encodes the constitutive law instead of learning the law (i.e.~the effective bulk and shear viscosities) directly. We parameterize this potential with symmetrized (i.e., even) input-convex neural networks (which we call ICNNEs) that incorporate crucial physical principles that guarantee the well-posedness of the governing equations. We do this without restrictive assumptions about the constitutive law or limiting the complex fluids the approach applies to.
(2) We learn the constitutive relation from \emph{velocity data}, which entails solving a PDE- or physics-constrained learning problem during training. Computationally, this is made possible by integrating ML and FE methods and software. We exploit automatic differentiation and adjoint functionalities to compute gradients, ensuring that the (discrete) PDE constraint is exactly enforced.
(3) We consider several NN-based models with varying degrees of in-built structure and evaluate their performance on a collection of benchmark data. Our numerical results demonstrate that unstructured models are, in general, unable to faithfully capture complex fluid dynamics. Moreover, to show we capture the underlying physical properties, we demonstrate that ICNNE-learned constitutive relations generalize to entirely different geometries and even across spatial dimensions.
(4) Our approach is designed for learning compressible flow models, which have higher-dimensional inputs than incompressible ones. As far as we are aware, ML-driven approaches for inferring constitutive laws in compressible fluids have not yet been established.

The main \emph{limitations} of the method are as follows:
(1) The training is currently done only for stationary problems or snapshots of time-dependent problems in which the Stokes assumption is valid. However, there is no fundamental barrier to extending the approach to learn from time-dependent velocity data, where the Stokes assumption is not satisfied.
(2) A standardized procedure for coupling neural networks with finite-element PDE solver libraries is not yet available. Our current implementation will hopefully enable others to build upon this framework and, in the long term, help establish learning within a more accessible and standardized combined NN–PDE setting.
(3) The present framework is restricted to local, history-independent constitutive laws and therefore does not yet cover fluids with memory or nonlocal behavior. Besides technical challenges, we see no fundamental obstacles to generalizing our method to these broader classes of complex fluids.

\section{Equations of motion for a viscous fluid}

Let $\Omega \subset \RR^2$ be the domain occupied by a viscous fluid. We limit our analysis to $\RR^2$ because we infer constitutive relations from 2D velocity fields, and the set of frame-indifferent constitutive laws in $\RR^2$ includes nearly all widely used compressible non-Newtonian models \cite{bird1987, malek2010}.
This limitation does not carry over to the computational implementation, which is valid for problems of any dimension because we only learn a local operator from strain-rates to stresses. For example, in \cref{subsec:hb}, we carry out a 3D computation with a constitutive law learned from 2D data.

The evolution of the fluid in $\Omega$ is described by conservation laws. Denoting the fluid velocity by $\bu$, the Cauchy stress tensor by $\bsigma$, the density by $\rho$, and the (possibly nonlinear) external forces by $\bff$, the equations for conservation of mass and momentum are given by
\begin{subequations}\label{eq:navier_stokes}
    \begin{align}
    \frac{\partial \rho}{\partial t} + \nabla\cdot(\rho\bu) = 0 \quad \text{in $\Omega$},& \label{eq:conservation_mass} \\
    \rho\left(\frac{\partial \bu}{\partial t} + \left( \bu\cdot\nabla\right)\bu\right) - \nabla\cdot \bsigma = \bff(\bu) \quad \text{in $\Omega$}.& \label{eq:momentum_balance}
    \end{align}
\end{subequations}
In this work, we consider only isothermal fluids, which allows us to dispense with the energy equation, because the pressure is a function of density alone.
However, aside from boundary and initial conditions, additional equations are still required to arrive at a closed system of equations. These include an equation of state that relates pressure and density, and the constitutive law, which is the subject of the next section.

\subsection{The constitutive law}\label{subsec:constutitive-law}

A key element of a viscous fluid model is the constitutive law, also called the rheology. It is generally written as a relationship between the Cauchy stress tensor $\bsigma$ and the strain-rate tensor $\bD\bu$, which is defined as the symmetric part of the velocity gradient $\nabla\bu$. In particular, if $\Sbb_2$ denotes the set of symmetric matrices in $\RR^{2\times 2}$, a constitutive law that is explicit and local is written in terms of a pressure field $p$ and a matrix operator $\Cc:\Sbb_2\to\Sbb_2$ as
\begin{align}
    \bsigma = -p\bI + \Cc(\bD\bu).
\end{align}
Instead of inferring a general matrix operator $\Cc$ from velocity data, we restrict its functional form by enforcing physical constraints.

\paragraph{Frame-indifference}

A fundamental principle is that of frame-indifference, which means that $\Cc$ is invariant to rotations of the reference frame \cite{truesdell2004}. In \cref{subapp:frame_indifference}, we show that the constitutive law is frame-indifferent if there exist an effective bulk viscosity $\zeta:\RR\times \RRp\to\RR$ and shear viscosity $\eta:\RR\times \RRp\to\RR$ such that
\begin{align}\label{eq:representation_of_C}
	\Cc(\bD\bu) = 2\eta(\alpha,\beta)\,\bS\bu + \zeta(\alpha,\beta) \alpha\, \bI,
\end{align}
where $\bS\bu$ is the deviatoric part of $\bD\bu$, $\alpha$ and $\beta$ correspond with two invariants of $\bD\bu$, defined as
\begin{align}\label{eq:invariants}
   \alpha:=\tr{\bD\bu} \quad \text{and} \quad \beta := |\bS\bu|,
\end{align}
and $|\cdot|$ denotes the matrix norm $|\bA|^2:=\frac{1}{2}\bA:\bA$ for $\bA\in\RR^{2\times2}$. We note that, in $\RR^3$, the general class of frame-indifferent models admits an additional term proportional to $(\bD\bu)^2$ \cite{truesdell2004}, but most fluid models ignore this higher order term \cite{bird1987, malek2010}.

\paragraph{Dissipativity}

We require viscous stresses to dissipate energy, such that $\Cc(\bD\bu):\nabla\bu \geq 0$ for all $\bu$, which implies compliance with the second principle of thermodynamics \cite{morro2023}. This condition, which we refer to as dissipativity, is equivalent to:%
\begin{align}\label{eq:dissipative_visc}
    \zeta(\alpha,\beta)\alpha^2 + 4\eta(\alpha,\beta) \beta^2 \geq 0 \quad \forall (\alpha, \beta)\in\RR\times\RRp.
\end{align}

\paragraph{Existence of a convex dissipation potential}

The final property we seek is that the constitutive law be derived from a convex and differentiable dissipation potential $\Jc$ defined in terms of a local potential $j:
\RR\times\RRp\to\RR$ as
\begin{align}\label{eq:dissipation_potential}
    \Jc(\bu) := \int_\Omega j(\alpha(\bu),\beta(\bu))\,\dd\bx.
\end{align}
In the next section, we show that the viscosities $\eta$ and $\zeta$ can be written in terms of the derivatives of $j$. In this work, we seek a constitutive law derived from $\Jc$ for three reasons.
(1) Practically all fluid constitutive laws of interest in industrial and scientific applications are derived from dissipation potentials \cite{goddard2019, morro2023}.
(2) For convex and dissipative potentials $\Jc$, we find the resulting equations of motion to be well-posed and numerically stable, see \cref{thm:well-posed}  for a theoretical justification. Well-posedness and numerical stability are crucial properties to ensure the robustness of our PDE-constrained optimization method.
(3) If $\Jc$ exists, we can learn the local dissipation potential $j$ instead of the viscosities $\zeta$ and $\eta$. This is desirable because it is easier to enforce the PDE's well-posedness, as shown below in \cref{thm:properties_j}.

\subsection{Energy formulation of the momentum balance equation}

Let $V$ denote a suitable Banach space of velocity fields on $\Omega$, with duality pairing $\langle \cdot, \cdot \rangle$.
If the dissipation potential $\Jc$ exists and is well defined and differentiable on $V$, we can formulate the momentum balance equation \eqref{eq:momentum_balance} as the minimization of an energy functional $\Fc:V\to \RR$ whenever two additional assumptions hold. First, the inertial forces in \eqref{eq:momentum_balance} must be neglected, such that $\rho (\partial\bu/\partial t + (\bu\cdot\nabla)\bu) \approx 0$. Without inertia, the momentum balance equation is called the Stokes equation. Second, we assume that the forces $\bff$ can be derived from the Fr\'echet derivative of a differentiable potential $\Phi:V\to \RR$. That is, $\langle \Phi'(\bu), \bv \rangle = \int_\Omega \bff(\bu)\cdot\bv\,\dd x$. Then, we can define the energy functional $\Fc:V\to \RR$ as
\begin{align}\label{eq:condition_on_F}
    \Fc(\bu) := \Jc(\bu) - \int_\Omega p \left(\nabla\cdot\bu\right) \,\dd x - \Phi(\bu).
\end{align}
In weak form, the Stokes equation with \eqref{eq:representation_of_C} as a constitutive law is given by
\begin{align}\label{eq:weak_form_stokes}
    \int_\Omega (-p + \zeta(\nabla\cdot\bu))(\nabla\cdot\bv) + 2 \eta\, (\bS\bu:\bS\bv)\,\dd x - \int_\Omega \bff(\bu) \cdot \bv\,\dd x = 0 \quad \forall \bv\in V.
\end{align}
If a dissipation potential exists, the Stokes equation is recovered by setting the Fr\'echet derivative of the energy functional $\Fc$ in the direction of $\bv$ to zero.  We have that
\begin{align}\label{eq:first_variation_energy}
    \begin{split}
    \langle \Fc'(\bu), \bv \rangle &=  \int_\Omega \frac{\partial j}{\partial \alpha} \,\tr{\bD\bv} + \frac{1}{2\beta} \frac{\partial j}{\partial \beta} \left( \bS\bu:\bS\bv \right)\,\dd x \\
    & - \int_\Omega p(\nabla\cdot\bv)\,\dd x - \int_\Omega \bff(\bu) \cdot \bv\,\dd x \quad \text{for all $\bv\in V$}.
    \end{split}
\end{align}
We can then relate the Cauchy stress tensor $\bsigma$ to the local dissipation potential $j$ by using the fact that $ \langle \Fc'(\bu), \bv \rangle = 0$ is equivalent to \eqref{eq:weak_form_stokes}. This yields
\begin{align}\label{eq:relationship_j_viscosity_2}
    \frac{\partial j}{\partial \alpha}(\alpha,\beta) = \zeta(\alpha,\beta) \alpha \quad \text{and} \quad \frac{\partial j}{\partial \beta} (\alpha,\beta) = 4\eta(\alpha,\beta) \beta.
\end{align}
Therefore, the viscous stresses $\Cc(\bD\bu)$ can be written in terms of $j$ as
\begin{align}\label{eq:viscous_stresses_j}
    \Cc(\bD\bu) = \frac{\partial j}{\partial \alpha}(\alpha,\beta)\, \bI + \frac{1}{2\beta}\frac{\partial j}{\partial \beta} (\alpha,\beta)\,\bS\bu.
\end{align}

A crucial element in our approach to learning constitutive laws is the well-posedness of the underlying fluid equations of motion. The following theorem presents sufficient conditions for $\Fc$ to have a unique minimizer.

\begin{theorem}\label{thm:well-posed}
    For a smooth and connected domain $\Omega\subset\RR^d$, with $d\geq1$, let $V$ be a closed subspace of the Sobolev space $W^{1,r}(\Omega)$ with $r > 1$. Assume that the energy functional $\Fc:V\to\RR$ is convex and coercive in the following sense: for any sequence $\bu_n$ in $V$ such that $\|\bu_n\|_V \to \infty$, it follows that $\Fc(\bu_n)\to\infty$. Then there exists a minimizer of $\Fc$ in $V$. Moreover, if $\Fc$ is strictly convex, this minimizer is unique.
\end{theorem}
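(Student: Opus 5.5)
We will prove this with the direct method of the calculus of variations, using that $V$ is reflexive. Since $1<r<\infty$, $W^{1,r}(\Omega)$ is reflexive, and so is its closed subspace $V$. For the case $r=\infty$, reflexivity fails and we would instead use weak-$*$ compactness. We assume $r<\infty$, or read the convexity/lower semicontinuity argument in the weak-$*$ topology.

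First, we set $m:=\inf_{V}\Fc\in[-\infty,\infty)$ and pick a minimizing sequence $\bu_n$ with $\Fc(\bu_n)\to m$. Coercivity gives boundedness. If $\|\bu_n\|_V$ were unbounded, some subsequence would satisfy $\|\bu_{n_k}\|_V\to\infty$, hence $\Fc(\bu_{n_k})\to\infty$. This contradicts $\Fc(\bu_{n_k})\to m<\infty$. So $\sup_n\|\bu_n\|_V<\infty$. By the Eberlein--Šmulian theorem in the reflexive space $V$, a subsequence converges weakly, $\bu_{n_k}\rightharpoonup \bu^*$. Because $V$ is closed and convex, it is weakly closed, so $\bu^*\in V$.

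The key step is weak lower semicontinuity of $\Fc$. The theorem assumes $\Fc$ is real-valued and convex on $V$, but states no continuity, so we must supply it. In our setting $\Fc$ is a sum of integral functionals with continuous integrands; for these, strong continuity on $V$ follows from growth bounds on $j$ and $\Phi$ via dominated convergence and Nemytskii operator continuity. Alternatively, one can use that a convex, finite-valued functional on a Banach space that is bounded above on a neighborhood of some point is continuous. We will therefore interpret the hypothesis as including (or implying, for the concrete $\Fc$ of \eqref{eq:condition_on_F}) strong lower semicontinuity; this is the main obstacle, and we will state it explicitly. Given this, Mazur's lemma does the rest. The sublevel sets $\{\Fc\le t\}$ are convex and strongly closed, hence weakly closed. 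Therefore $\Fc(\bu^*)\le\liminf_k\Fc(\bu_{n_k})=m$. In particular $m>-\infty$ and $\bu^*$ is a minimizer.

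Finally, for uniqueness, suppose $\bu_1\neq\bu_2$ are both minimizers. Strict convexity gives $\Fc(\tfrac12(\bu_1+\bu_2))<\tfrac12\Fc(\bu_1)+\tfrac12\Fc(\bu_2)=m$, contradicting minimality. The smoothness and connectedness of $\Omega$ are used only to guarantee that $V$ is a well-behaved closed subspace (e.g.\ through Korn/Poincaré-type inequalities, which are typically what yield coercivity of the concrete $\Fc$). They play no further role in the abstract argument.
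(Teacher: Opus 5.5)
Your argument is correct, and it is the direct method that the paper cites from \cite[Sec.~8.2]{evans2010}. Your uniqueness step (strict convexity at the midpoint of two minimizers) is exactly the paper's.

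The one genuine difference is where weak lower semicontinuity comes from. Evans proves it for integral functionals $\int_\Omega L(\nabla\bu,\bu,x)\,\dd x$ directly from convexity of $L$ in the gradient variable and a lower bound on $L$. You instead derive it abstractly from convexity of $\Fc$ plus strong lower semicontinuity, via Mazur.

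Stating that extra hypothesis explicitly is necessary, not pedantic: under the hypotheses as literally written, the theorem fails in infinite dimensions. Take a discontinuous linear functional $\ell$ on $V$, so that $\ker\ell$ is dense, and $\bu_0$ with $\ell(\bu_0)\neq0$. Then $\Fc(\bu)=\|\bu-\bu_0\|_V^2+|\ell(\bu)|$ is real-valued, convex (strictly convex when $r=2$) and coercive. Its infimum is $0$, approached from within $\ker\ell$, but it is attained nowhere. The paper's citation supplies the missing ingredient implicitly through the integral structure of $\Fc$. Your route works for any convex, lower semicontinuous $\Fc$ but needs that continuity as an input. Evans's route needs the integral form but no a priori continuity of the functional.

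Two refinements:
\begin{itemize}
\item For the concrete $\Jc$ you do not need growth bounds or Nemytskii continuity. Strong lower semicontinuity follows from Fatou's lemma along a subsequence with $\bD\bu_n\to\bD\bu$ a.e., because a convex continuous $j$ lies above an affine function whose integral converges.
\item Your aside on $r=\infty$ would not work as written. A closed subspace of $W^{1,\infty}$ need not be weak-$*$ closed, and Mazur gives weak, not weak-$*$, closedness of sublevel sets. Restricting to $1<r<\infty$, as Evans does, is the right choice.
\end{itemize}
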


\begin{proof}
    A proof of the existence of a minimizer can be found in \cite[Sec 8.2]{evans2010}. Regarding the uniqueness of minimizers, assume $\bu_1$ and $\bu_2$ are two different minimizers of $\Fc$. In this case, $\bu_1\neq\bu_2$ yet $\Fc(\bu_1) = \Fc(\bu_2)$. Then, if $\Fc$ is strictly convex, we have that
    \begin{align}
        \Fc\left( \frac{1}{2}\bu_1 + \frac{1}{2}\bu_2\right) <
        \frac{1}{2}\Fc(\bu_1) + \frac{1}{2}\Fc(\bu_2) = \Fc(\bu_1).
    \end{align}
    This is a contradiction because $\bu_1$ and $\bu_2$ minimize $\Fc$.
\end{proof}

Theorem \ref{thm:well-posed} establishes two  conditions for the equations of motion to be well-posed: coercivity and convexity of $\Fc$. Coercivity depends on the function space $V$ on which $\Fc$ is defined and, often, on the boundary conditions and applied forces. Identifying the space $V$ is a complex question and, for many parameterizations of $\Jc$, we cannot expect $V \subset W^{1,r}(\Omega)$ to hold for $r > 1$. This, for example, is the case for the viscous plastic sea ice model in \cref{subsec:vp}, see e.g.~\cite{brandt2022}.

\begin{remark}\label{remark:extension-well-posedness}
A possible avenue for extending the validity of \cref{thm:well-posed} to $r=1$ could involve the space of vector fields of bounded deformation in $\Omega$, denoted by $BD(\Omega)$ \cite{temam1985}. This space consists of the set of vector fields in $L^1(\Omega)$ whose symmetric gradients belong to the space of matrix-valued Radon measures. By relaxing the definition of the energy functional to incorporate measure-valued derivatives, one can show that, under linear growth conditions, the energy functional is coercive and, via a compactness argument, prove the existence of a minimizer. This extension will be explored in future work.
\end{remark}

\subsection{Sufficient conditions for a structure-preserving constitutive law}

In section \ref{sec:computational_framework} below, we present a computational framework for inferring constitutive laws that satisfy the three properties listed in \cref{subsec:constutitive-law}. To facilitate the construction of finite dimensional manifolds of candidate functions automatically satisfying these properties, we establish three properties on $j$ for $\Jc$ to be convex and the resulting constitutive law dissipative (frame-indifference holds due to $j$'s dependence on $\alpha$ and $\beta$). In practice, as demonstrated by the numerical examples in section \ref{sec:numerics}, guaranteeing convexity and dissipativity is enough for achieving robust models for complex fluids. This may be justified by the potential extension of \cref{thm:well-posed} to the case where $r=1$, which we explain in Remark \ref{remark:extension-well-posedness}. Under these conditions, we could establish the existence of solutions to energy functionals with linear growth conditions. In \cref{subsec:structure_preserving_architecture}, we explain that linear growth conditions arise in the ICNNE model due to the use of softplus activation functions.

\begin{theorem}\label{thm:properties_j}
Let the constitutive law be defined in terms of a differentiable local dissipation potential $j=j(\alpha,\beta)$ via \eqref{eq:viscous_stresses_j}. Assume that the following properties hold: 1) $j$ is convex, 2) $j$ is increasing in $\beta$, and 3) the gradient $\nabla_{\iota}j := (\frac{\partial j}{\partial \alpha}, \frac{\partial j}{\partial \beta})^\top$ is zero when $(\alpha,\beta) = (0,0)$. Then, the constitutive law is frame-indifferent, dissipative, and is derived from the convex dissipation potential $\Jc$ defined in \eqref{eq:dissipation_potential}.
\end{theorem}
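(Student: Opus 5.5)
The plan is to verify the three conclusions separately, working pointwise in $\bx$ with the invariants $\alpha=\tr\bD\bu$ and $\beta=|\bS\bu|$ from \eqref{eq:invariants}, and using the stress representation \eqref{eq:viscous_stresses_j}.

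\emph{Frame-indifference.} Under a change of frame $\bD\bu\mapsto \bO\,\bD\bu\,\bO^\top$ with $\bO$ orthogonal, we have $\alpha\mapsto\alpha$, $\bS\bu\mapsto\bO\,\bS\bu\,\bO^\top$ and $\beta\mapsto\beta$. Hence the coefficients $\partial_\alpha j$ and $\frac{1}{2\beta}\partial_\beta j$ in \eqref{eq:viscous_stresses_j} are unchanged, and $\Cc$ transforms covariantly: $\Cc(\bO\bD\bu\bO^\top)=\bO\,\Cc(\bD\bu)\,\bO^\top$. Equivalently, setting $\zeta=\partial_\alpha j/\alpha$ and $\eta=\partial_\beta j/(4\beta)$ through \eqref{eq:relationship_j_viscosity_2} recovers the form \eqref{eq:representation_of_C}, which was shown to be frame-indifferent in \cref{subapp:frame_indifference}. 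This step uses none of the hypotheses 1)--3).

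\emph{Convexity of $\Jc$.} Fix $\bu_1,\bu_2$ and $t\in[0,1]$, and set $\bu_t=t\bu_1+(1-t)\bu_2$. The map $\bu\mapsto\alpha(\bu)$ is linear, and $\bu\mapsto\beta(\bu)=|\bS\bu|$ is a norm composed with a linear map, hence convex: $\beta(\bu_t)\le t\beta(\bu_1)+(1-t)\beta(\bu_2)$. Monotonicity of $j$ in $\beta$ (hypothesis 2) followed by convexity of $j$ (hypothesis 1) gives, pointwise,
\begin{align*}
j(\alpha(\bu_t),\beta(\bu_t))
&\le j\bigl(t\alpha(\bu_1)+(1-t)\alpha(\bu_2),\,t\beta(\bu_1)+(1-t)\beta(\bu_2)\bigr)\\
&\le t\,j(\alpha(\bu_1),\beta(\bu_1))+(1-t)\,j(\alpha(\bu_2),\beta(\bu_2)).
\end{align*}
Integrating over $\Omega$ yields convexity of $\Jc$. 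The derivation of \eqref{eq:viscous_stresses_j} from $\Jc$ via \eqref{eq:first_variation_energy} was already carried out above, so the law is indeed derived from $\Jc$.

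\emph{Dissipativity.} Since $\Cc$ is symmetric, $\Cc(\bD\bu):\nabla\bu=\Cc(\bD\bu):\bD\bu$. Use $\bI:\bD\bu=\alpha$ and $\bS\bu:\bD\bu=\bS\bu:\bS\bu=2\beta^2$. Then
\[
\Cc(\bD\bu):\nabla\bu=\alpha\,\partial_\alpha j+\beta\,\partial_\beta j=\nabla_\iota j(\alpha,\beta)\cdot(\alpha,\beta).
\]
The gradient inequality for convex differentiable $j$, evaluated at the point $(\alpha,\beta)$ and tested against $(0,0)$, gives
\[
j(0,0)\ge j(\alpha,\beta)-\nabla_\iota j(\alpha,\beta)\cdot(\alpha,\beta),
\]
that is, $\nabla_\iota j(\alpha,\beta)\cdot(\alpha,\beta)\ge j(\alpha,\beta)-j(0,0)$. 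Applying the same inequality at the origin, where $\nabla_\iota j(0,0)=0$ by hypothesis 3, gives $j(\alpha,\beta)\ge j(0,0)$ on the whole convex domain $\RR\times\RRp$. Combining the two inequalities, the dissipation is nonnegative, which is \eqref{eq:dissipative_visc}.

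\emph{Expected obstacle.} The delicate point is the degenerate set $\beta=0$, where the factor $\frac{1}{2\beta}\partial_\beta j$ in \eqref{eq:viscous_stresses_j} is not defined and $\beta(\bu)$ fails to be differentiable in $\bu$. The plan is to note that the product $\frac{1}{2\beta}\partial_\beta j\,\bS\bu$ has norm of order $|\partial_\beta j|$, since $|\bS\bu|=\beta$. It is therefore bounded near $\beta=0$, and we set it to $\bzero$ wherever $\bS\bu=\bzero$. With this convention the dissipation identity $\alpha\,\partial_\alpha j+\beta\,\partial_\beta j$ still holds there, because the shear term contributes $0=\beta\,\partial_\beta j$ at $\beta=0$. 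The convexity argument is unaffected, since it never differentiates $\beta(\bu)$. A second, minor point is that the origin lies on the boundary of $\RR\times\RRp$. The gradient inequality still applies there, because the one-sided derivatives define $\nabla_\iota j(0,0)$ and the domain is convex.
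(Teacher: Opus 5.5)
Your proof is correct and follows the paper's argument. Frame-indifference comes from $j$ depending only on the invariants. Convexity of $\Jc$ follows from convexity of $\bu\mapsto\beta(\bu)$ together with monotonicity and then convexity of $j$. Dissipativity comes from the monotone-gradient inequality with $\bi_2=\bzero$, which you unpack into the two supporting-hyperplane inequalities whose sum gives it. Your extra remarks on the set $\beta=0$ and on the origin lying on the boundary of $\RR\times\RRp$ are sound refinements that the paper leaves implicit.
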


\begin{proof}
    Let $j:\RR\times\RRp\to\RR$ be convex, in the sense that
    \begin{align}
        j(s\bi_1+(1-s)\bi_2) \leq s j(\bi_1) + (1-s)j(\bi_2)
        \quad \forall s
        \in[0,1],\, \bi_1,\bi_2 \in \RR\times\RRp.
    \end{align}
    We also assume that $j = j(\alpha,\beta)$ is differentiable, increasing in $\beta$, and its gradient zero at the origin $(\alpha,\beta)=(0,0)$. For a domain $\Omega\subset\RR^2$, consider the dissipation potential defined by \eqref{eq:dissipation_potential}.
    The constitutive law for the model is then defined by the viscous stress tensor \eqref{eq:representation_of_C}. Since $j$ depends only on the invariants of $
    \bD\bu$, the constitutive law is rotationally-invariant, and thus frame-indifferent, as defined in appendix \ref{subapp:frame_indifference}. Next, we show that $\Jc$ is convex: let $\bu,\bv$ be two velocity fields and $s\in[0,1]$. Then,
    \begin{align}
        \Jc(s\bu + (1-s)\bv)=\int_\Omega j(s\alpha (\bu) + (1-s)\alpha(\bv),\beta(s\bu + (1-s)\bv))\,\dd x.
    \end{align}
    The function $\bu\mapsto\beta(\bu)$ is clearly convex. Moreover, since $j$ is increasing in $\beta$, it follows that
    \begin{equation}
        \begin{split}
        \int_\Omega j(s\alpha (\bu) + (1-s)\alpha(\bv),\beta(s\bu + (1-s)\bv))\,\dd x \\ \leq  \int_\Omega j(s\alpha (\bu) + (1-s)\alpha(\bv),s\beta (\bu) + (1-s)\beta(\bv))\,\dd x
        \end{split}
    \end{equation}
    Then, the convexity of $\Jc$ follows from the convexity of $j$. For the dissipativity of the constitutive law, first notice that $j$ is convex if and only if it has a monotonically increasing gradient:
    \begin{align}
        \left( \nabla_\iota j(\bi_1) - \nabla_\iota j(\bi_2) \right) \cdot (\bi_1 - \bi_2) \geq 0 \quad \forall \bi_1,\bi_2\in\RR\times\RRp.
    \end{align}
    By setting $\bi_1=(\alpha,\beta)$ and $\bi_2 = \bzero$ above, we find that
    \begin{align}\label{eq:dissipative_visc_j}
        \frac{\partial j}{\partial\alpha}(\alpha,\beta)\alpha+ \frac{\partial j}{\partial\beta}(\alpha,\beta)\beta \geq 0 \quad \forall (\alpha,\beta)\in\RR\times\RRp,
    \end{align}
    because $\nabla_\iota j(\bzero)=\bzero$ by assumption.
    This yields property \eqref{eq:dissipative_visc} by substituting the expressions in \eqref{eq:relationship_j_viscosity_2} into \eqref{eq:dissipative_visc_j}.
\end{proof}

\subsection{Specialized fluid models}

The numerical examples studied below in \cref{sec:numerics} include an incompressible fluid and two sea ice models. These models are based on slight modifications of \cref{eq:navier_stokes}, which we present next.

\subsubsection{Incompressible fluids}\label{subsubsec:incompressible}

A fluid is said to be incompressible if it can only experience isochoric motions; that is, the material volume of any part of the fluid remains constant in time. This condition reduces the equation of conservation of mass \eqref{eq:conservation_mass} to the divergence-free condition on the velocity field: $\nabla\cdot\bu = 0$ in $\Omega$, see \cite[section 1.4]{truesdell2000}. As a result, $\alpha = \tr{\bD\bu} = 0$ for all velocity fields, and the bulk viscosity $\zeta$ is not well-defined. Thus, the local dissipation potential depends only on the second invariant $\beta$, i.e.~$j = j(\beta)$, and the dissipation potential is given by
\begin{align}
    \Jc_{\mathrm{inc}}(\bu):=\int_\Omega j(\beta(\bu))\,\dd x.
\end{align}
Moreover, for incompressible flows, an equation of state for the pressure is no longer required. In fact, the pressure $p$ plays the role of a Lagrange multiplier that enforces the incompressibility constraint.

\subsubsection{Depth-integrated sea ice fluid models}\label{subsubsec:seaice}
Sea ice models are generally derived by integrating the three-dimensional momentum balance equation in the vertical direction along the thickness of the ice cover \cite{gray1994}. Denoting the ice-thickness field with $H:\Omega\to\RRp$ and the concentration field with $A:\Omega\to[0,1]$, the thickness-integrated momentum balance equation over an ice-covered domain $\Omega\subset\RR^2$ is given by
\begin{align}\label{eq:sea_ice_momentum_balance}
    \rho_i H\left(\frac{\partial \bu}{\partial t} + \left( \bu\cdot\nabla\right)\bu\right) -  \nabla\cdot \bsigma = \btau_o(\bu) + \btau_a \quad \text{in $\Omega$}.
\end{align}
In \eqref{eq:sea_ice_momentum_balance}, $\rho_i$ is the density of ice, and $\btau_o(\bu)$ and $\btau_a$ denote the ocean and atmospheric drag, respectively. These are parameterized in terms of drag coefficients $C_o$, $C_a$, densities $\rho_o$, $\rho_a$, and ocean and atmosphere velocity fields $\bu_o$, $\bu_a$ as
\begin{align}\label{eq:drag_forces}
    \btau_o(\bu) := \rho_oC_o|\bu_o-\bu|(\bu_o-\bu) \quad\text{and} \quad \btau_a := \rho_aC_a|\bu_a|\bu_a.
\end{align}
The ocean and wind drag correspond with the Fr\'echet derivative of the potential
\begin{align}\label{eq:sea_ice_force_potential}
    \Phi(\bu):= -\frac{1}{3}\int_\Omega\rho_oC_o|\bu_o-\bu|^3\,\dd x + \int_\Omega\rho_aC_a|\bu_a|\,(\bu_a\cdot\bu)\,\dd x.
\end{align}
Thus, if the sea ice rheology can be written in terms of a dissipation potential $\Jc(\bu)$ and we neglect the inertial terms in \eqref{eq:sea_ice_momentum_balance}, the momentum balance equations correspond with the optimality conditions of the energy potential $\Fc$ defined in \eqref{eq:condition_on_F}.

Additional advection equations must be solved for the thickness and concentration fields. For simplicity, the sea ice problems in \cref{subsec:vp} and \cref{subsec:dem} set the ice thickness constant. If we neglect thermodynamic growth, the concentration field evolves according to
\begin{align}\label{eq:advection_A}
    \frac{\partial A}{\partial t} + \nabla \cdot \left( \bu A\right) = 0,
\end{align}
an equation which can be derived from \eqref{eq:conservation_mass} by introducing $\rho = \rho_i A$.

\section{Inferring the dissipation potential from velocity data}\label{sec:computational_framework}

Computationally, we infer the local dissipation potential $j$ by representing it in terms of a parametric family of continuously differentiable functions $j_\theta$ with parameters $\theta\in\Theta$. As data, we take $N$ snapshots  of velocity fields $\bu_{\mathrm{data},1}, \ldots, \bu_{\mathrm{data},N}$, with $N$ typically small or even 1, as in \cref{subsec:ns,subsec:hb}. We solve the following optimization problem:
\begin{align}\label{eq:minimization}
    \theta^\ast\in\argmin_{\theta\in\Theta} \Lc(\theta), \quad \Lc(\theta) := \frac{1}{2} \sum_{i=1}^N \lambda_i\| \bu_i(\theta) - \bu_{\mathrm{data},i}\|^2 + \Pi(\theta).
\end{align}
In \eqref{eq:minimization}, $\bu_i(\theta)$ is the velocity field at snapshot $i$ obtained by solving \cref{eq:navier_stokes}, the equations of fluid motion, with the constitutive law defined by $j_\theta$. The positive scalars $\lambda_i>0$ are weights associated to each velocity snapshot, $\|\cdot\|$ denotes the $\ell^2$ norm evaluated at a given set of observation points, and $\Pi$ is the penalization term defined in the next section, in \eqref{eq:penalty}. Problem \eqref{eq:minimization} is a PDE-constrained optimization problem, with the constraint given by the fluid model implicitly defined by $\bu(\theta)$. In \cref{subsec:numerical_implementation} we explain that, in practice, we solve this optimization problem with gradient-based methods.

\subsection{A structure-preserving neural network architecture (ICNNE)}\label{subsec:structure_preserving_architecture}

\cref{thm:properties_j} provides sufficient conditions on the dissipation potential $j_\theta$ to guarantee that the inferred constitutive law is frame-indifferent, dissipative, and derived from a convex dissipation potential $\Jc$. Frame-indifference follows from representing $j_\theta$ in terms of the invariants $\alpha$ and $\beta$. We enforce the convexity of $j_\theta$ with input-convex NNs \cite{amos2017} with softplus activation functions. To guarantee that $j_\theta$ is increasing in $\beta$, we enforce evenness in $\beta$ by writing
\begin{align}
    j_\theta(\alpha,\beta) = \frac{1}{2} \left( g_\theta(\alpha,\beta) + g_\theta(\alpha,-\beta)\right) \quad \forall(\alpha,\beta)\in\RR\times\RRp,
\end{align}
where $g_\theta$ is an input-convex neural network defined on $\RR^2$. That is, $j_\theta$ is the average of $g_\theta$ and its reflection with respect to the $\alpha$-axis. Since we exactly enforce evenness in $\beta$, we call this NN architecture the ICNNE model. As a result, we have that
\begin{align}\label{eq:null_derivative_evenness}
    \frac{\partial j_\theta}{\partial \beta}(\alpha,0)=0\quad\forall\alpha\in\RR.
\end{align}
Since $j_\theta$ is convex in $\beta$, we have that $\partial j_\theta/\partial\beta$ is increasing in $\beta$. Together with \eqref{eq:null_derivative_evenness}, it follows that $\partial j_\theta/\partial\beta(\alpha,\beta)\geq 0$ for all $(\alpha,\beta)\in\RR\times\RRp$, and therefore $j_\theta$ is increasing in $\beta$. Thus, if $\nabla_\iota j_\theta(0,0) = \bzero$, the three properties listed in \cref{thm:properties_j} are satisfied for $j_\theta$. We enforce this last property by penalizing the $\alpha$-derivative of $j_\theta$, such that
\begin{align}\label{eq:penalty}
    \Pi(\theta) := \lambda_p |\partial_\alpha j_\theta(0,0)|^2.
\end{align}

From a numerical perspective, \eqref{eq:null_derivative_evenness} is beneficial because it alleviates potential numerical instabilities that arise from the factor $1/\beta$ in \eqref{eq:viscous_stresses_j}. Moreover, \eqref{eq:null_derivative_evenness} is a natural condition to enforce whenever the effective shear viscosity $\eta$ is bounded near $\beta=0$. Under this condition, we have that $\lim_{\beta\to 0}\eta(\alpha,\beta)\beta = 0$ and therefore \eqref{eq:null_derivative_evenness} follows from \eqref{eq:relationship_j_viscosity_2} because $j_\theta$ is continuously differentiable.

An ICNNE-based dissipation potential has linear growth for large values of $(\alpha, \beta)$ because we use softplus activation functions and input-convex NNs. Therefore, the extension of \cref{thm:well-posed} proposed in Remark \ref{remark:extension-well-posedness} would establish the existence of solutions for ICNNE-based models.

\subsection{Alternative NN architectures}

Besides the ICNNE architecture described in \cref{subsec:structure_preserving_architecture}, we also explore alternative NN architectures and problem formulations.
To highlight the advantages of the ICNNE architecture, we use three different NN architectures: the input-convex NN architecture without evenness in $\beta$ and standard feed-forward NNs with and without evenness. We refer to these three models as ``ICNN'', ``SNNE'', and ``SNN'', respectively. They are listed as models to compare the ICNNE method with in \cref{tab:architectures}, together with the model SNNsq described below.

\begin{table}[t]
    \centering
    \caption{The five models explored in the numerical examples of \cref{sec:numerics}. The ICNNE architecture, with the penalization \eqref{eq:penalty}, enforces all of the desired properties that ensure the model is well-posed.}
    \label{tab:architectures}
    \begin{tabular}{l c c c c c}
    \toprule
    \rule{0pt}{2.5ex}
    & \textbf{ICNNE} & ICNN & SNN & SNNE & SNNsq \\
    \midrule
    Convex & $\checkmark$ & $\checkmark$ & $\times$ & $\times$ & $\times$ \\
    Even $j(\cdot, -\beta)=j(\cdot, \beta)$ & $\checkmark$ & $\times$ & $\times$ & $\checkmark$ & $\times$ \\
    $\beta $ & $|\bS\bu|$ & $|\bS\bu|$ & $|\bS\bu|$ & $|\bS\bu|$ & $|\bS\bu|^2$\\
    \bottomrule
    \end{tabular}
\end{table}

\subsection{Alternative formulation with squared second invariant}\label{subsubsec:beta_squared_form}

In addition to different NN architectures, we also compute numerical results with a different formulation of the viscous stress tensor \eqref{eq:viscous_stresses_j}. In \cref{subsec:structure_preserving_architecture}, we explain that the factor $1/\beta$ in \eqref{eq:viscous_stresses_j} can lead to numerical instabilities. To reduce these instabilities, we consider an alternative formulation, defining $\beta = |\bS\bu|^2$ instead of $\beta = |\bS\bu|$. The viscous stress tensor can then be written as
\begin{align}\label{eq:viscous_stresses_j_alternative}
    \Cc(\bD\bu) = \frac{\partial j}{\partial \alpha}(\alpha,\beta)\, \bI + \frac{\partial j}{\partial \beta} (\alpha,\beta)\,\bS\bu
\end{align}
and therefore $2\eta = \partial j/\partial \beta$. Since the factor $1/\beta$ is no longer present, we do not need to enforce the evenness condition on $j$ to ensure numerical stability. Although this is a more straightforward formulation, enforcing convexity on $j$ makes the potential grow at least linearly in $\beta=|\bS\bu|^2$. This restricts the constitutive law
to be Newtonian or shear-thickening for the following reason: If $j$ is twice differentiable and convex in $\beta = |\bS\bu|^2$, we must have that $\partial\eta/\partial\beta \geq 0$ because $\partial^2 j/\partial \beta^2 \geq 0$. Many fluids of practical interest are not shear-thickening, see the examples in \cref{subsec:hb,subsec:vp}. For this reason, we only explore this alternative formulation with the SNN architecture, which does not enforce convexity on $j$. Throughout this paper, we refer to this model as ``SNNsq'', see \cref{tab:architectures}.

\subsection{Numerical implementation combining PyTorch and Firedrake} \label{subsec:numerical_implementation}

We implement the framework for inferring the potential $j_\theta$ described in \cref{sec:computational_framework} by integrating two Python modules: the ML library PyTorch \cite{paszke2019} and the FE library Firedrake \cite{ham2023}. PyTorch is employed to construct the NN-based dissipation potential $j_\theta$ and to minimize the loss function $\Lc(\theta)$ using its implementation of the limited-memory Broyden–Fletcher–Goldfarb–Shanno (LBFGS) algorithm \cite{wright1999}. Because $\Lc(\theta)$ is defined in terms of the velocity misfit, its evaluation necessitates solving the PDEs implicitly defined by $\bu = \bu(\theta)$ that describe the fluid motion. We compute gradients of $\Lc(\theta)$ via the adjoint method, so we also require the solution of the associated adjoint problems. Both the PDEs and their adjoint systems are solved in Firedrake using the finite element method, a provably-convergent and robust approach to solving PDEs numerically. Firedrake and PyTorch are coupled by evaluating the NN-based potential $j_\theta$ and its derivatives in PyTorch at the mesh's quadrature points. To avoid dividing by zero, we use the following regularization of the second invariant in our implementation:
\begin{align}
    \beta_\epsilon := \left(\frac{1}{2}\, \bS\bu:\bS\bu + \epsilon^2\right)^{1/2}
\end{align}
for a parameter $\epsilon > 0$ whose value is specified for each numerical test in \cref{sec:numerics} below. Thus, in practice, we define our PDEs in terms of the local parameterized potential $j_\theta = j_\theta(\alpha,\beta_\epsilon)$.

We leverage the automatic differentiation capabilities in Firedrake and PyTorch to avoid explicit manual derivative calculations when applying Newton’s method and constructing the adjoint system. Firedrake and PyTorch are coupled in such a way that gradients can be computed efficiently and in a way that does not depend on the number of NN parameters. Indeed, the main computational cost in both loss evaluation and gradient calculation lies in solving the governing PDE for fixed NN parameters; the adjoint PDE is linear and can therefore be solved at a cost comparable to a single Newton step for the forward PDE. In contrast to these PDE operations, the time spent on NN evaluation and backpropagation is negligible. Since we employ a deterministic quasi-Newton method for loss minimization, we usually achieve satisfactory convergence within about 100–300 iterations (see, for example, \cref{fig:ns_results}). Consequently, the learning framework naturally extends to the case of 3D velocity data: as long as the governing PDE can be solved in Firedrake, generalizing the method to learn from 3D velocity data is straightforward.
The numerical results can be reproduced with the repository accessible with the following link: \url{https://bitbucket.org/gonzalogddiego/convfit/src/v1.0.0/}.

\section{Numerical Results}\label{sec:numerics}
We apply the computational framework described in \cref{sec:computational_framework} to infer four different constitutive laws, which are summarized in \cref{tab:numerical_exps}, with the five different models described in \cref{tab:architectures}.  For compressible flow problems, we only consider variants of our ICNNE method because, to our knowledge, existing methods to recover constitutive laws for fluids from velocity data are strictly limited to incompressible fluids \cite{xu2021,sunol2025} or compressible problems in one dimension \cite{dediego2026}.
For the HB model, which is incompressible,
we compare our method with that of \cite{sunol2025} to demonstrate our approach's accuracy in the context of incompressible fluid problems. Below, we summarize the setup of our numerical examples.

\begin{table}
\caption{Overview of numerical experiments: Compressible Navier--Stokes (NS), incompressible Herschel--Bulkley (HB), compressible viscous plastic (VP), and discrete element method (DEM) data.
The table indicates whether the model is compressible or incompressible, whether it includes inertia, and the target potential and unknowns. It also lists the number of velocity snapshots and points per snapshot used for learning, and the noise level.
}\label{tab:numerical_exps}
\centering
\begin{tabular}{c|cccccc}
\toprule
 model & \parbox{10ex}{\centering compress-\\ible?} & inertia? & target & \parbox{7ex}{\centering\#snap-\\shots} & \parbox{7ex}{\centering\#data\\ pts.} & \parbox{6ex}{\centering noise} \\ \midrule
NS & yes & yes & $j(\alpha, \beta)$ & 1 & 500 & 1\% \\
HB & no & no & $j(\beta)$ & 1 & 1500 & 10\% \\
VP & yes & no & $j(\alpha,\beta,A)$, $p^\ast$ & 8 & 1000 & 5\% \\
DEM & yes & no & $j(\alpha,\beta,A)$, $p^\ast$ & 3 & 100 & --- \\
\bottomrule
\end{tabular}
\end{table}

\paragraph{NN architectures and parameterization}
For each model, we consider four architectures, denoted by their layer widths: (20), (5, 5), (10, 10), and (5, 5, 5). These deliberately small networks suffice for our test problems, consistent with prior work on input-convex neural networks for constitutive modeling \cite{geuken2025}.
For the NS and HB examples, we use standard multilayer perceptrons. Since
in the VP and DEM sea-ice models, the concentration $A$ strongly affects the viscosity (see, e.g., the exponential term in the pressure law \eqref{eq:pressure_vp}),
we use the NN parameterizations
\begin{align}
    j^{\mathrm{scaled}}_\theta(\alpha, \beta, A) = \exp(\xi_\theta(A))\, j_\theta(\alpha,\beta,A).
\end{align}
Here, $j_\theta$ corresponds with one of the NN models described in \cref{sec:computational_framework} (e.g.~ICNNE (5 5)). The function $\xi_\theta:[0,1]\to\RR$ is a standard multilayer perceptron with the same width and number of layers as $j_\theta$.
To gather statistically significant results for each model and architecture, we repeat the training over 10 different initializations. Therefore, each model is trained a total of 40 times.

\paragraph{Loss function}
In our experiments, we infer the network parameters (and in the VP and DEM datasets also an additional parameter $p^\ast>0$) by minimizing the loss $\Lc(\theta)$ defined by \eqref{eq:minimization} with penalty \eqref{eq:penalty}, i.e.,
\begin{align}\label{eq:loss}
    \Lc(\theta) := \frac{1}{2} \sum_{i=1}^N \lambda_i\| \bu_i(\theta) - \bu_{\mathrm{data},i}\|^2 + \frac{\lambda_p}{2} |\nabla_\iota j_\theta(0,0)|^2.
\end{align}
Here, the norm is evaluated at the observation points $\bx_1$,\ldots,$\bx_{N_o}$, that is, $\| \bv\|^2:=\sum_{k=1}^{N_o}|\bv(\bx_k)|^2$.
We set $\lambda_i := \|\bu_{\mathrm{data},i}\|_{L^\infty(\Omega)}^{-2}$ to ensure consistency across velocity scales of the different snapshots. The penalization parameter $\lambda_p$ is
specified separately for each problem. The observation points $\bx_1,\ldots,\bx_{N_o}$ are randomly sampled from a uniform distribution over the domains. The number of snapshots $N$ and observation points $N_o$ are specified in \cref{tab:numerical_exps}.

\paragraph{Data generation}
In the NS, HB and VP problems we learn a known constitutive law from noisy velocity data that we generate by solving the underlying PDEs with known constitutive law. Then, we add additive noise to the uncorrupted velocity observations $\hat{\bu}_{\mathrm{data},i}$, prescribing the unscaled noise level $\lambda_n$ as follows:
\begin{align}
    \bu_{\mathrm{data},i}(\bx) = \hat{\bu}_{\mathrm{data},i}(\bx) +\lambda_{n}\|\hat{\bu}_{\mathrm{data},i}\|_{L^\infty(\Omega)}\varepsilon_n, \quad \varepsilon_n\sim\mathcal{N}(0,\bI).
\end{align}
For the DEM example, we use data generated with a discrete element method (DEM) for which the underlying constitutive law is unknown.

\paragraph{Optimization}
We use PyTorch's implementation of the L-BFGS algorithm, which allows one to specify a number of inner and outer iterations, which we set to 10 and 30, respectively. Thus, for each problem, we perform around 300 evaluations of the loss functional. The number of evaluations may be lower if convergence occurs early ($\max|\nabla\Lc|<10^{-7}$ or the absolute change in $\Lc$ falls below $10^{-9}$), or if the PDE solver fails for 10 consecutive evaluations. We do not use NN weight regularization because it did not improve results, and we prefer formulations with no or few tunable parameters.

We pre-train our NN with quadratic (i.e., Newtonian) dissipation potential data. We limit this pre-training to a few steps because, when working with input-convex NNs, we sometimes find that initializing our training with a model that strongly resembles a quadratic functional may stall convergence; input-convex NNs are known to be difficult to train due to the positivity requirements on their weights \cite{bunne2022, uscidda2023}. We consider separate training and test datasets.

\paragraph{Performance diagnostics}
The training dataset is used to learn the dissipation potential and the test dataset is used to evaluate the performance of the learned constitutive law.
As performance measure for these test problems, we display the relative $L^2$ error of the velocity field, defined by $\|\bu(\theta) - \bu_{\mathrm{exact}}\|_{L^2(\Omega)}/\|\bu_{\mathrm{exact}}\|_{L^2(\Omega)}$, where $\bu_{\mathrm{exact}}$ is the true velocity field of the test dataset.
These results are shown in \crefrange{tab:ns_test_error}{tab:vp_test}, with \textit{N.~fail} indicating that Newton's method fails when training the model or when attempting to solve the test problem, over all initializations. This is the typical failure mode we observe in our computations, and it often arises during the PDE-based training. The failure of the PDE solver is intimately related to the main issue we address with the proposed ICNNE model: if the NN-based model for the constitutive law does not guarantee the well-posedness of the PDE, training may be stalled because the NN-based PDE becomes unsolvable and we cannot evaluate the loss function $\mathcal{L}(\theta)$. For the DEM problem, since we do not have a test dataset, we only evaluate the performance of the learned constitutive law with the optimized loss function value.

For the NS, HB and VP problems, where the true constitutive relation is known,
we present the best test error, the corresponding loss value and architecture.
In addition, we report the number of training runs (out of 40) for which Newton's method breaks down when solving the test problem, together with the average number of Newton iterations for runs that succeed in solving the test problem.

\subsection{Compressible Navier--Stokes equations}\label{subsec:ns}

As a first example, we infer the compressible Navier--Stokes (NS) equations with constant bulk and shear viscosities $\zeta_0$ and $\eta_0$. The dissipation potential is then the quadratic function
\begin{align}\label{eq:newtonian_quadratic_potential}
    j(\alpha,\beta) = \frac{1}{2}\zeta_0 \alpha^2 + 2\eta_0 \beta^2,
\end{align}
and we generate data with $\zeta_0=10$ and $\eta_0 = 0.1$. Since the fluid is compressible, we need an equation of state to compute the pressure $p$ and density $\rho$, which we take to be $p = C \rho^{\gamma}$ with $C = 1$ and $\gamma=1.4$. The momentum equation \eqref{eq:momentum_balance} includes the inertial terms and is discretized with continuous piecewise linear velocity elements (CG1). The density and pressure are represented with piecewise constant elements (DG0) and obtained by solving the conservation of mass equation \eqref{eq:conservation_mass} from the initial state $\rho(\bx,t=0)=1$ to a stationary state using standard upwinding. During PDE-constrained training, the stationary density and pressure fields are treated as fixed input data, and only the momentum equation is solved.

\paragraph{Training problem (curved channel)}
The training data is generated by solving for a stationary flow through the curved channel depicted in the top left of \cref{fig:ns_results}. The  corresponding stationary density field is shown in the bottom left panel of the same figure.
The domain is periodic in the horizontal direction with no-slip boundary conditions $\bu = \bzero$ along the upper and lower walls, and the flow is driven by the body force $\bff = (1,0)^\top$. The domain is meshed with 2000 cells. The models are trained on a single snapshot of velocity data corrupted with 1\% noise ($\lambda_n=0.01$) and sampled at $N_o = 500$ uniformly distributed points. The penalization parameter is set to $\lambda_p = 10^{4}$.

\begin{figure}[tb]
    \centering
    \includegraphics[width=0.98\textwidth]{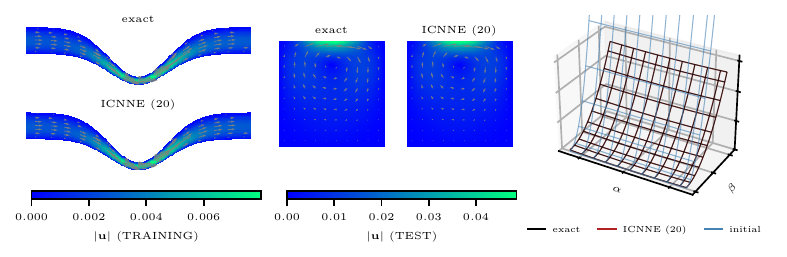}\\
    \includegraphics[width=\linewidth]{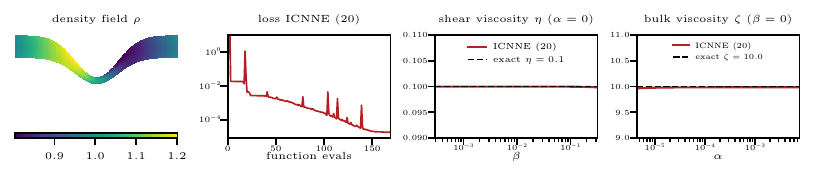}\\[-2ex]
    \caption{Compressible NS:
    Top row: The left and middle columns show the velocity field for the training and test problems, respectively, while the right panel compares the learned dissipation potential with the exact potential and the initial guess.
    Bottom row: Stationary density field of the training problem (left), values of the loss function $\Lc$ across the L-BFGS function evaluations (second panel), and effective shear and bulk viscosities at $\alpha = 0$ and $\beta = 0$, respectively (right two panels), computed with the best scoring model, ICNNE (20). }
    \label{fig:ns_results}
\end{figure}

\paragraph{Test problem (lid-driven cavity)}
The trained models are evaluated on a regularized lid-driven cavity problem on the unit square $\Omega = [0,1]^2$ (middle column of \cref{fig:ns_results}). No-slip conditions $\bu=\bzero$ are enforced on the left, right, and bottom boundaries, while the top boundary moves with the smooth profile
\begin{align}
    \bu(x,y=1) = \left(\begin{array}{c}
         \frac{1}{2} u_{\mathrm{top}}\left(1-\cos{(2\pi x)} \right)\\
        0
    \end{array} \right)
\end{align}
with $u_{\mathrm{top}}=0.05$. The test mesh has 5000 cells. As for training, the ground truth density and pressure are first obtained from a stationary solve and then held fixed when solving the momentum equation with the learned constitutive laws.

\paragraph{Results}
The optimal loss values and relative $L^2$ velocity errors for the test problem over all architectures and initializations are shown in \cref{tab:ns_test_error}. The best results are obtained with the ICNNE model with network (20). This model is used to compute the velocity fields shown in the top row of \cref{fig:ns_results}, whose right panel compares the learned potential with the exact potential and the initial guess. The bottom row of this figure shows the decay of the loss over the L-BFGS function evaluations and the learned effective shear and bulk viscosities along $\alpha=0$ and $\beta=0$, respectively. In \cref{tab:ns_test_error}, for the ICNN and SNN models, we observe that, despite the training being stalled due to Newton's method failing, the models are still capable of solving the test problem successfully, although a very large relative error is achieved in these cases.

\begin{table}[tbh]
    \centering
    \caption{Compressible NS: Optimal values of loss function $\Lc(\theta)$ and relative $L^2$ error for test problem (lid-driven cavity flow) achieved over the 4 architectures, each trained over 10 different initializations of the network parameters $\theta$. For each model, we show the best test error achieved, together with the corresponding loss value and the network architecture which achieves the best test error. The final two rows show the number of runs for which Newton's method fails when solving the test PDE problem, and the average number of PDE Newton iterations, computed over the runs that solve the test problem successfully.}
    \label{tab:ns_test_error}
        \small
    \begin{tabular}{l c c c c c }
    \toprule
     & \textbf{ICNNE} & ICNN & SNN & SNNE & SNNsq \\
    \midrule
    Loss & \textbf{\num{1.9e-05}} & \textit{N. fail} & \textit{N. fail} & \num{1.9e-05} & \num{2.2e-05} \\
    Test error & \textbf{\num{1.9e-04}} & \num{6.8e-01} & \num{6.8e-01} & \num{4.7e-04} & \num{3.3e-03} \\
    Best network & \textbf{20} & 10 10 & 10 10 & 20 & 5 5 \\
    Fails (of 40) & \textbf{0} & 1 & 23 & 15 & 19 \\
    Avg. Newton & \textbf{2.5} & 133.7 & 82.1 & 5.9 & 5.0 \\
    \bottomrule
    \end{tabular}
\end{table}

\subsection{Herschel–Bulkley for an incompressible fluid}\label{subsec:hb}

The Herschel–Bulkley (HB) model is widely used for complex fluids that exhibit a yield stress. It is an incompressible fluid model, and therefore the corresponding equations of motion are those described in \cref{subsubsec:incompressible}. In its regularized form, the local dissipation potential is given by
\begin{align}\label{eq:hb_potential}
    j(\beta) = \begin{cases}
        2\eta_0 \beta^2 & \text{if } \beta \leq \beta_0, \\
        4\tau_0 \beta + \frac{4k}{n+1} \beta^{n+1} + C & \text{if } \beta > \beta_0,
    \end{cases}
\end{align}
where $\tau_0$ is the yield stress, $k$ is the consistency index, $n$ is the flow index, $\eta_0$ the limiting viscosity, and $C$ is a constant that ensures continuity of the dissipation potential. The corresponding effective shear viscosity is
\begin{align}\label{eq:hb_effective_visc}
    \eta(\beta) = \begin{cases}
        \eta_0 & \text{if } \beta \leq \beta_0, \\
        \frac{\tau_0}{\beta} + k \beta^{n-1} & \text{if } \beta > \beta_0.
    \end{cases}
\end{align}
For values of $n$ in $(0,1)$, the model exhibits shear-thinning behavior, for $n>1$, it exhibits shear-thickening behavior. We generate data with $\beta_0=0.001$, $k=0.5$, $n=0.5$, and $\tau_0=0.1$, so that the viscosity is shear-thinning, and set $\eta_0 = \tau_0/\beta_0 + k\beta_0^{n-1}$ to ensure continuity at $\beta = \beta_0$. We solve the incompressible Stokes equations using Taylor--Hood finite elements,
i.e., continuous quadratic elements (CG2) for the velocity and continuous linear elements (CG1) for the pressure, and use $\epsilon = 10^{-8}$ in the regularized invariant $\beta_\epsilon$.

\paragraph{Training problem (constricted channel)}
The training data is generated by solving for a stationary flow through the constricted channel shown in the left panel of \cref{fig:hb_results}, where the fluid is driven by a pressure gradient, which we set equal to 5. The domain is of unit length and its maximum height is 0.5. The constriction around $x = 0.5$ is modeled with two Gaussian curves on the upper and lower walls. The flow is periodic in the horizontal direction with no-slip boundary conditions $\bu = \bzero$ along the walls. We mesh the domain with 1250 quadrilateral cells and train the models on a single snapshot of velocity data corrupted with 10\% noise ($\lambda_n = 0.10$) and sampled at $N_o = 1500$ uniformly distributed observation points.

\paragraph{Test problem (3D toothpaste extrusion)}
The trained models are evaluated on a three-dimensional extrusion problem that mimics squeezing toothpaste from a tube (middle panel of \cref{fig:hb_results}). No-slip conditions are enforced on the tube walls with the exception of the side wall, where a Gaussian velocity profile of amplitude $u_{\max} = 10^{-3}$ is prescribed to drive the flow, and the nozzle end, which is a free-outflow boundary. The same Taylor--Hood discretization is used, this time on a three-dimensional tetrahedral mesh generated with gmsh.

\begin{figure}[bt]
    \centering
    \includegraphics[width=\textwidth]{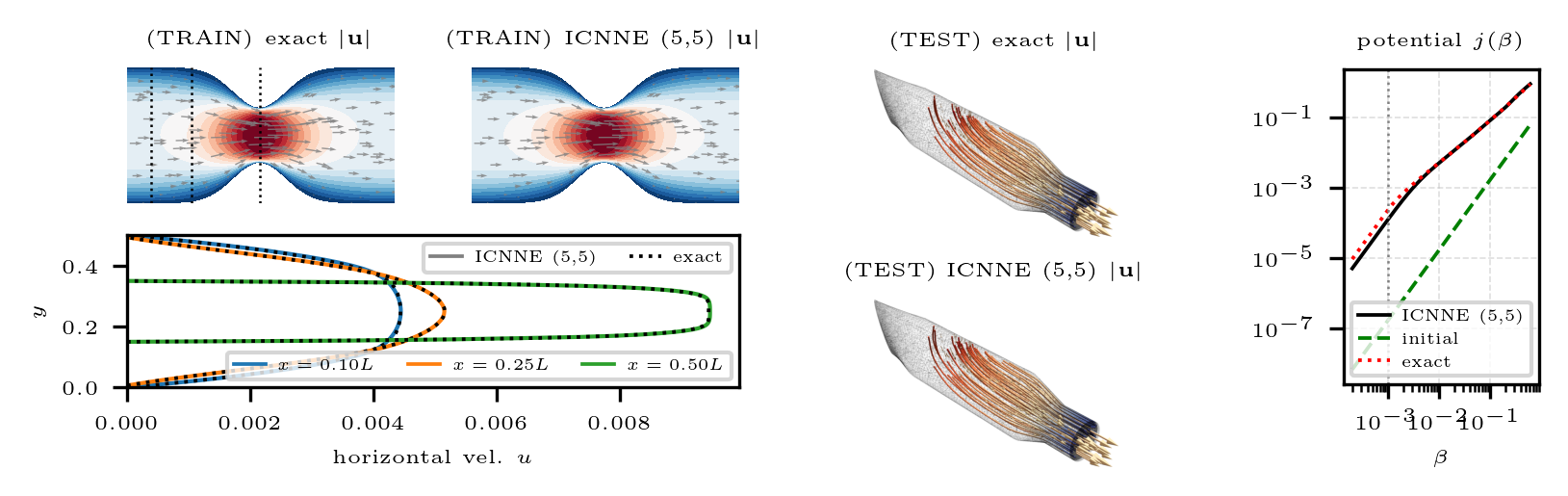}
    \caption{Incompressible HB Stokes:
    The left and middle panels show the exact and learned velocity fields for the training and test problems, respectively. In the bottom-left, we display the velocity cross-section profiles at $x=0.1L$, $0.25L$ and $0.50L$, comparing exact vs the model. The right panel shows the dissipation potential for the best scoring model.}
    \label{fig:hb_results}
\end{figure}

\paragraph{Comparison with the method of \cite{sunol2025}}
For this problem, we include an external comparison with the method from \cite{sunol2025} for inferring non-Newtonian models for incompressible fluids. Rather than a dissipation potential, this method parametrizes the effective shear viscosity directly as a monotone interpolation between a zero-shear viscosity $\eta_0$ and an infinite-shear viscosity $\eta_\infty$,
\begin{align}\label{eq:sunol_viscosity}
    \eta(\beta) = \eta_\infty\left[1 + \left(\frac{\eta_0}{\eta_\infty} - 1\right)\big(1 - F(z)\big)\right], \quad F(z) = \sum_{m=1}^{M} a_m\, \sigma\!\left(\frac{z-\mu_m}{s_m}\right),
\end{align}
where $z = \log(2\beta)$, $\sigma$ is the logistic sigmoid and $M=12$. The mixture parameters $(a_m,\mu_m,s_m)_{m=1}^M$ are the outputs of an NN with one hidden layer of 16 units that takes $z$ as input, with $a_m$ normalized by a softmax and $s_m>0$ enforced by a softplus, so that $F$ increases monotonically from 0 to 1; $\eta_0,\eta_\infty>0$ are also trainable, resulting in 646 parameters. Thus, $\eta$ is monotone in $\beta$, but, in contrast to our ICNNE model, the resulting stress is not guaranteed to derive from a convex dissipation potential. We implement \eqref{eq:sunol_viscosity} within our PyTorch--Firedrake framework (\cref{subsec:numerical_implementation}) and train it on the same data, with the same loss function, and with the same optimizer as our models. As for our models, the network is pre-trained to a Newtonian (constant-viscosity) model. Since the architecture is fixed by \cite{sunol2025}, we report results over 10 random initializations rather than the 40 runs used for our models.

\paragraph{Results}
The optimal loss values and relative $L^2$ velocity errors for the extrusion test problem are shown in \cref{tab:hb_test_error}. The best results are obtained with the ICNNE model with network (5 5); the right panel of \cref{fig:hb_results} shows the dissipation potential of this model. Although the method of \cite{sunol2025} achieves the same loss value during training as our best ICNNE model, it produces a larger test error than the ICNNE model. Similarly to the NS problem, we observe in \cref{tab:hb_test_error} that, for the SNNE and SNNsq models, the best test error is achieved by a model that fails to complete the PDE-based training, yet successfully solves the test problem. In the case of the SNNE and SNNsq models, the LBFGS method during training broke down early on, such that the final fluid model is very close to the Newtonian model used in the initialization. This explains the very low number of iterations required for solving the test problems for SNNE and SNNsq in table \ref{tab:hb_test_error}.

\begin{table}[t]
    \centering
    \caption{Incompressible HB Stokes: Optimal values of loss function and relative $L^2$ error for test problem (3D toothpaste extrusion), together with fails and average number of iterations with Newton's method. See caption for \cref{tab:ns_test_error} for details. For the HB problem, we include an external comparison with the method from \cite{sunol2025} for incompressible flow problems, which we run for a fixed architecture over 10 seeds.}
    \label{tab:hb_test_error}
    \small\setlength{\tabcolsep}{4pt}
    \begin{tabular}{l c c c c c c }
    \toprule
     & \textbf{ICNNE} & ICNN & SNN & SNNE & SNNsq & \cite{sunol2025} \\
    \midrule
    Loss & \textbf{\num{8.8e-03}} & \textit{N. fail} & \textit{N. fail} & \textit{N. fail} & \textit{N. fail} & \num{8.8e-03} \\
    Test error & \textbf{\num{1.7e-02}} & \textit{N. fail} & \textit{N. fail} & \num{2.1e-01} & \num{2.1e-01} & \num{3.7e-02} \\
    Best network & \textbf{5 5} & -- & -- & 5 5 & 5 5 5 & 10 10 \\
    Fails (of 40) & \textbf{0} & 40 & 40 & 0 & 0 & 1/10 \\
    Avg. Newton & \textbf{3.5} & -- & -- & 1.4 & 1.1 & 11.8 \\
    \bottomrule
    \end{tabular}
\end{table}

\subsection{The compressible viscous-plastic model for sea ice}\label{subsec:vp}

The viscous-plastic model for sea ice, also known as Hibler's model \cite{hibler1979}, is the most common model for sea ice. It is a two-dimensional compressible fluid model that depends on the sea ice concentration $A:\Omega\to[0,1]$ and thickness $H:\Omega \to \RRp$ fields, and whose equations of motion are the thickness-averaged equations presented in \cref{subsubsec:seaice}.
Here, we assume $H = H_0$ is fixed and we neglect inertial terms in the momentum balance equation \eqref{eq:sea_ice_momentum_balance}. We infer the concentration-dependent constitutive law, which is defined in terms of the potential:
\begin{align}\label{eq:vp_potential}
    j(\alpha,\beta, A) = p(A) \sqrt{4e^{-2}\beta^2 + \alpha^2 + \Delta_{\min}^2}.
\end{align}
In \eqref{eq:vp_potential}, the pressure law is given by
\begin{align}\label{eq:pressure_vp}
    p(A) = p^\ast H_0\exp{(-C(1-A))},
\end{align}
where $p^\ast$ is the ice strength and $C$ an empirical constant. The parameter $e$ is the eccentricity of the plastic yield curve, and $\Delta_{\min}$ is a regularization parameter that yields a viscous law for low strain-rates. The potential \eqref{eq:vp_potential} yields the effective bulk and shear viscosities
\begin{align}
    \zeta(\alpha,\beta,A) = \frac{p(A)}{\sqrt{4e^{-2}\beta^2 + \alpha^2 + \Delta_{\min}^2}} \quad \text{and} \quad \eta(\alpha,\beta,A) = e^{-2}\zeta(\alpha,\beta,A).
\end{align}
We generate data with eccentricity $e = 2$, non-dimensional regularization parameter\footnote{Note that this value is larger than values typically adopted in VP-style sea ice models, where values as low as $\Delta_{\min} = \SI{2e-9}{\per\second}$
are used \cite{shih2023, mehlmann2021}. Values this small produce nearly discontinuous velocity features that can be viewed as sea ice opening, and they also introduce a strong nonlinearity into the governing PDE that may necessitate specialized solvers.} $\Delta_{\min} = \SI{1e-6}{\per\second}$, fixed ice thickness $H_0 = \SI{1}{\meter}$, concentration-decay coefficient $C = 20$, and ice strength $p^\ast=\SI{27.5e3}{\newton\per\meter\squared}$. During training, $C$ and $H_0$ are held fixed, while $p^\ast$ is treated as an additional trainable parameter that is jointly optimized with the network weights $\theta$. The drag terms \eqref{eq:drag_forces} use $C_o = 5.5\times10^{-3}$, $C_a = 1.2\times10^{-3}$, $\rho_o = \SI{1026}{\kilo\gram\per\meter\cubed}$, and $\rho_a = \SI{1.3}{\kilo\gram\per\meter\cubed}$.
We discretize the velocity with continuous piecewise bilinear elements on quadrilateral cells (CG1) and the concentration with piecewise constant elements (DG0), and use $\epsilon = 10^{-6}$ in the regularized invariant $\beta_\epsilon$. The equations of motion are non-dimensionalized with the characteristic length $[x] = \SI{1e5}{\meter}$, velocity $[u] = \SI{1}{\meter\per\second}$, and stress $[\sigma] = \rho_oC_o[u]^2[x]$. The penalization parameter is set to $\lambda_p = 10^{2}$.

\paragraph{Training problem (time-dependent cyclone)}
The training data is generated by solving the time-dependent compressible VP momentum equation coupled with the concentration advection equation \eqref{eq:advection_A} on a $512\,\mathrm{km}\times 512\,\mathrm{km}$ square domain with no-slip boundary conditions on all walls, see \cref{fig:vp_setup}. The domain is meshed with $100\times 100$ structured quadrilateral cells. The ice is driven primarily by a time-dependent wind field corresponding to a cyclone that travels from the center to a corner and back over a simulation of duration $T = 192\,\mathrm{h}$ (8 days), consistent with the benchmark of \cite{mehlmann2021}. The cyclone reverses direction at day 4, producing both cyclonic and anticyclonic forcing; its maximum wind speed is $12.9\,\mathrm{m\,s^{-1}}$. The ocean current is a steady circular vortex of amplitude $\sim 0.01\,\mathrm{m\,s^{-1}}$. The concentration is initialized at $A_0 = 0.95$ and advected each time step with an upwind scheme using a time step $\Delta t = 0.5\,\mathrm{h}$. We save $N = 8$ snapshots of the velocity and concentration at evenly spaced times over the simulation. The velocity data is corrupted with $5\%$ noise ($\lambda_n = 0.05$), and during the training, the loss is restricted to $N_o = 1000$ random observation points distributed uniformly over the domain.

\begin{figure}[tbh]
    \centering
    \includegraphics[width=\linewidth]{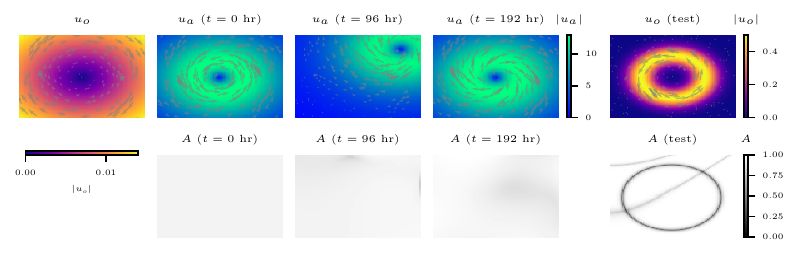}
    \caption{Compressible VP: ocean, wind, and concentration fields for the training and test problems. The first four columns to the left depict the training setup: the ocean velocity, which is stationary, together with the wind velocity and concentration at three time instants. The column to the right shows the ocean velocity and concentration field used for the test problem.}
    \label{fig:vp_setup}
\end{figure}

\paragraph{Test problem (steady vortex with heterogeneous concentration)}
The trained models are evaluated on a separate steady-state problem on a doubly-periodic $512\,\mathrm{km}\times 512\,\mathrm{km}$ square domain, meshed with $50\times 50$ cells, see the right column of \cref{fig:vp_setup}. The ocean current is a circular vortex centered at the origin, and the concentration field is spatially heterogeneous with low-concentration patches designed to produce large concentration gradients, following the benchmark of \cite{mehlmann2021}, which is designed to strain nonlinear solvers. A single velocity snapshot is generated with the exact VP law and is used to compute the relative $L^2$ error reported in \cref{tab:vp_test}.

\paragraph{Results}
The optimal loss values and relative $L^2$ velocity errors for the test problem are presented in \cref{tab:vp_test}. In \cref{fig:vp_results}, we also plot results and the dissipation potential of the model that achieves the lowest test error, an ICNNE model with a (5 5) architecture. The test case we consider is particularly challenging due to the strong discontinuities in the concentration field (see \cref{fig:vp_setup}). As a result, Newton's method is incapable of converging for practically all non-ICNNE models. The ICNNE model achieves an optimal test error which is almost one order of magnitude lower than the other models, and only one in four ICNNE models fail to converge when solving the test problem. The middle panel in \cref{fig:vp_results} shows that the learned dissipation potential closely matches the exact potential over the range of strain rates present in the training data.

\begin{table}[tbh]
    \centering
    \caption{Compressible VP: Optimal values of loss function and relative $L^2$ error for test problem (ocean vortex over heterogeneous concentration), together with fails and average number of iterations with Newton’s method. See \cref{tab:ns_test_error} for details.}
    \label{tab:vp_test}
    \begin{tabular}{l c c c c c }
    \toprule
     & \textbf{ICNNE} & ICNN & SNN & SNNE & SNNsq \\
    \midrule
    Loss & \textbf{\num{7.6e-04}} & \num{1.1e-03} & \textit{N. fail} & \num{5.7e-03} & \num{3.3e-02} \\
    Test error & \textbf{\num{7.4e-02}} & \num{1.2e-01} & \textit{N. fail} & \num{4.6e-01} & \num{4.3e-01} \\
    Best network & \textbf{5 5} & 5 5 5 & -- & 5 5 & 5 5 5 \\
    Fails (of 40) & \textbf{11} & 38 & 40 & 37 & 39 \\
    Avg. Newton & \textbf{14.0} & 69.5 & -- & 4.0 & 7.0 \\
    \bottomrule
    \end{tabular}
\end{table}

\begin{figure}[tbh]
    \centering
    \includegraphics[width=0.98\textwidth]{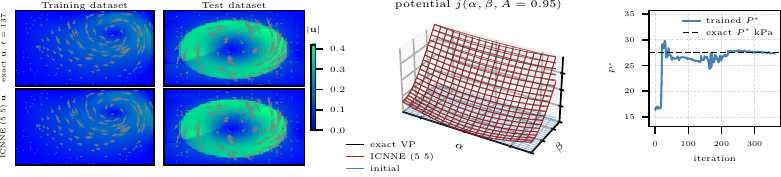}
    \caption{Compressible VP:
    The four  panels to the left compare the exact (top row) and learned (bottom row) velocity fields for the training problem at $t = \SI{137}{\hour}$ (left column) and the test problem (right column). In the middle, we show the learned dissipation potential, and on the right the ice strength $p^\ast$ convergence over training iterations for the best scoring model.
    }
    \label{fig:vp_results}
\end{figure}

\subsection{Learning a sea ice constitutive law from DEM data}\label{subsec:dem}

As a final test case, we learn a constitutive law from data produced with a DEM for sea ice. Unlike the previous experiments, there is no known ground truth constitutive law. The training data is produced with SubZero \cite{manucharyan2022}, a discrete-element model that represents sea ice as a collection of rigid polygonal floes whose dynamics are governed by Newton's laws of motion. Simulations are run on a doubly-periodic $100\,\mathrm{km}\times 100\,\mathrm{km}$ square patch of ocean containing 5000 floes of thickness $\SI{2}{\meter}$ at a mean concentration of $0.8$. The ice is driven only by the ocean current shown on the left of \cref{fig:subzero_results} (there is no wind forcing), and stationary states are obtained for three maximum ocean speeds: $0.1$, $0.5$, and $\SI{1}{\meter\per\second}$. For each stationary state, the floe velocities are area-averaged onto a regular Cartesian grid of size $10\times10$ to produce a gridded velocity field $\bu_{\rm data}$ together with a gridded concentration field $A_{\rm data}$ (the fractional area covered by ice in each cell). To solve the equations of motion described in \cref{subsubsec:seaice}, we set the drag parameters in \eqref{eq:drag_forces} to the values used in SubZero, $C_o = 3\times10^{-3}$, $C_a = 10^{-3}$, $\rho_o = \SI{1027}{\kilo\gram\per\meter\cubed}$, and $\rho_a = \SI{1.2}{\kilo\gram\per\meter\cubed}$. As in \cref{subsec:vp}, the equations are non-dimensionalized with the same values, and the pressure law \eqref{eq:pressure_vp} uses $C = 20$ and $H_0 = \SI{1}{\meter}$. For PDE-constrained training, we discretize the domain with a doubly-periodic $50 \times 50$ quadrilateral mesh that is scaled to match the physical dimensions of the DEM domain. Velocity is approximated with continuous piecewise bilinear elements (CG1) and concentration with piecewise constants (DG0), and we use $\epsilon = 10^{-6}$ in the regularized invariant $\beta_\epsilon$. The DEM velocity and concentration are projected onto the FEM function spaces. Before projection, the concentration perturbation is smoothed by retaining only its leading spectral mode to suppress sub-grid fluctuations. The ocean forcing velocity is similarly interpolated onto the CG1 space.

\paragraph{Training problem (DEM stationary states)}
The training data consists of the three velocity snapshots corresponding to the stationary states at different maximum velocities. Following \cref{subsec:vp}, we assume the pressure law \eqref{eq:pressure_vp} with $C$ and $H_0$ fixed, and we fit the network parameters $\theta$ and the ice strength $p^\ast$. The loss functional is restricted to the FEM cells that coincide with active DEM grid cells, i.e.\ cells for which a DEM measurement is available. Since the DEM output is inherently noisy, no additional noise is added to the training data. The initialization procedure is identical to the VP experiment: the potential is pre-trained to a concentration-dependent quadratic function. There is no test dataset; the quality of each model is evaluated by the final training loss $\Lc(\theta)$. The penalization parameter is set to $\lambda_p = 10^{2}$.

\paragraph{Results}
In \cref{tab:subzero_comparison}, we present the optimal values of the loss function $\Lc(\theta)$; the best fit is achieved with the ICNNE (10 10) architecture. The potential computed with this model is plotted in the right panel of \cref{fig:subzero_results}; interestingly, the potential presents much larger values for $\alpha < 0$, indicating resistance to shear is much larger under compressive rather than expansive strain-rates, something to be expected in a DEM where particles do not stick. In the middle panels of \cref{fig:subzero_results}, we compare the velocities computed with this model with the data.

\begin{table}[tbh]
    \centering
    \caption{DEM: Values of the loss function $\Lc(\theta)$ after training the dissipation potential on the DEM data. For each model, we show the network architecture that achieves the lowest value.}
    \label{tab:subzero_comparison}
    \begin{tabular}{l c c c c c }
    \toprule
     & \textbf{ICNNE} & ICNN & SNN & SNNE & SNNsq \\
    \midrule
    Loss & \textbf{\num{4.4e-05}} & \num{7.9e-05} & \num{8.7e-04} & \num{1.1e-04} & \num{7.5e-05} \\
    Best network & \textbf{10 10} & 5 5 5 & 5 5 5 & 20 & 5 5 5 \\
    \bottomrule
    \end{tabular}
\end{table}

\begin{figure}[tbh]
    \centering
    \includegraphics[width=0.98\textwidth]{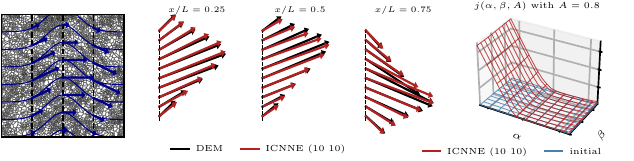}
    \caption{DEM: We learn a constitutive law that reproduces DEM data. In this case, no ground truth constitutive law is known. The panel to the left shows the polygonal ice floes simulated in the DEM and the driving ocean current in blue. The three panels in the middle compare the DEM and model velocity fields at three cross-sections of the domain, for $u_o = \SI{0.5}{\meter\per\second}$. The panel to the right shows the trained dissipation potential (red) together with the initial guess (blue).}
    \label{fig:subzero_results}
\end{figure}

\section{Conclusions}
We developed a robust method that combines flexible NN representations with physical principles to identify constitutive laws in complex compressible and incompressible fluids from one or a few sparse velocity-field snapshots. Rather than learning the constitutive law, which makes it hard to enforce the physical properties required for well-posed governing equations, we learn the dissipation potential, which implicitly defines the constitutive law.
By representing this potential with an even input-convex NN (ICNNE), we incorporate fundamental physical principles which ensure that the governing equations remain solvable during and after training.
We compare several alternative NN parameterizations for the constitutive law and demonstrate that, with these other formulations, maintaining both robustness and solvability is difficult. Using three benchmark complex fluids with known constitutive relations, two compressible and one incompressible, we verify the accuracy of our method and show that alternative approaches either lose robustness or suffer from reduced accuracy.
We also demonstrate the method on a system with an unknown constitutive law, where interactions among irregularly shaped particles produce complex flow.

Regarding computational cost, the training is dominated by the forward and adjoint PDE solves, whose number is independent of the network size and which are independent across snapshots, so the approach scales to larger meshes and to 3D velocity data as long as the forward problem can be solved efficiently. Possible extensions include learning from time-dependent data with inertia, anisotropic constitutive laws using structure tensors, history-dependent laws via internal variables as in \cite{RajCaoStuartEtAl26}, and, motivated by the VP and DEM examples, the inference of sea-ice rheologies from observed ice velocities.

\bibliographystyle{siamplain}
\bibliography{bibliography}

\appendix

\section{Characterization of frame-indifferent constitutive laws}\label{subapp:frame_indifference}
Let $\Sbb_2$ and $\Obb_2$ denote the set of real-valued symmetric and orthogonal matrices in $\RR^{2\times 2}$, with
\begin{align}
	\Obb_2 := \left\lbrace \bO\in\RR^{2\times 2} : \bO\bO^\top = \bO^\top\bO = \bI \right\rbrace.
\end{align}
A matrix operator $\Cc:\Sbb_2\to\Sbb_2$ is said to be rotationally-invariant if
\begin{align}\label{eq:isotropy}
	\Cc(\bO\bA\bO^\top) = \bO \Cc(\bA)\bO^\top
\end{align}
holds for all $\bO\in\Obb_2$. A rotationally-invariant matrix operator yields a local tensor operator that is \emph{frame-indifferent} in the following sense. Under a rotation of the reference frame, the output of the tensor operator transforms accordingly. The following theorem characterizes rotationally-invariant matrix operators in terms of scalar-valued functions.

\begin{theorem}\label{thm:isotropy}%
    A function $\Cc:\Sbb_2\to\Sbb_2$ is rotationally-invariant if and only if there exist two scalar functions $\psi_1$ and $\psi_2$ from $\RR^2$ to $\RR$ such that
    \begin{align}\label{eq:representation_of_C_psi}
    	\Cc(\bA) = \psi_1(\iota_{\bA}) \bI + \psi_2(\iota_{\bA}) \bA,
    \end{align}
    for all $\bA\in\Sbb_2$, with
    $\iota_{\bA}:= \left( \tr{\bA}, \det{\bA} \right)$
    denoting the principal invariants of $\bA$.
\end{theorem}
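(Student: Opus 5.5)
The plan is to handle the two directions separately. The \emph{if} direction is a direct computation. The \emph{only if} direction rests on the fact that a rotationally-invariant map must be diagonal in the eigenbasis of $\bA$, combined with a Cayley--Hamilton/interpolation argument in dimension two.

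For the \emph{if} direction, I use that $\tr$ and $\det$ are invariant under conjugation by $\bO\in\Obb_2$, so $\iota_{\bO\bA\bO^\top}=\iota_{\bA}$. Then
\begin{align}
\Cc(\bO\bA\bO^\top)=\psi_1(\iota_{\bA})\bI+\psi_2(\iota_{\bA})\bO\bA\bO^\top=\bO\Cc(\bA)\bO^\top,
\end{align}
since $\bO\bI\bO^\top=\bI$.

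For the \emph{only if} direction, fix $\bA\in\Sbb_2$ and diagonalize it as $\bA=\bO_0\Lambda\bO_0^\top$ with $\Lambda=\mathrm{diag}(\lambda_1,\lambda_2)$. Rotational invariance gives $\Cc(\bA)=\bO_0\Cc(\Lambda)\bO_0^\top$, so it suffices to understand $\Cc$ on diagonal matrices.
\begin{itemize}
\item Taking the reflection $\bO=\mathrm{diag}(1,-1)$, which lies in $\Obb_2$ and fixes $\Lambda$, gives $\Cc(\Lambda)=\bO\Cc(\Lambda)\bO^\top$. This forces the off-diagonal entries of $\Cc(\Lambda)$ to vanish, so $\Cc(\Lambda)=\mathrm{diag}(c_1,c_2)$.
\item Taking the permutation matrix $\bO$ that swaps the axes gives $\Cc(\mathrm{diag}(\lambda_2,\lambda_1))=\mathrm{diag}(c_2,c_1)$. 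Hence $c_1=g(\lambda_1,\lambda_2)$ and $c_2=g(\lambda_2,\lambda_1)$ for a single function $g$.
\end{itemize}

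Next I solve $\psi_1+\psi_2\lambda_i=c_i$ for $i=1,2$.
\begin{itemize}
\item If $\lambda_1\neq\lambda_2$, set $\psi_2=(c_1-c_2)/(\lambda_1-\lambda_2)$ and $\psi_1=c_1-\psi_2\lambda_1$.
\item If $\lambda_1=\lambda_2$, then $\Lambda=\lambda\bI$ commutes with every $\bO$, and the invariance forces $\Cc(\lambda\bI)$ to be a multiple of $\bI$. Then set $\psi_2=0$ and $\psi_1=c_1$.
\end{itemize}
These expressions are symmetric under swapping $\lambda_1\leftrightarrow\lambda_2$, because $c_1$ and $c_2$ swap along with the eigenvalues. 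So they are well-defined functions of the unordered pair $\{\lambda_1,\lambda_2\}$. That unordered pair is determined by $(\tr\bA,\det\bA)$ as the roots of $t^2-\tr\bA\, t+\det\bA$, which makes $\psi_1,\psi_2$ functions of $\iota_{\bA}$.

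Finally, conjugating back gives $\Cc(\bA)=\bO_0(\psi_1\bI+\psi_2\Lambda)\bO_0^\top=\psi_1\bI+\psi_2\bA$. This representation does not depend on the choice of $\bO_0$, since the right-hand side involves only $\bA$ and the invariants. The functions are defined only on the image of $\iota$, namely pairs with $(\tr)^2\ge 4\det$, and I extend them arbitrarily, e.g.\ by zero, to all of $\RR^2$.

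The main obstacle is this well-definedness bookkeeping. I need to show that the coefficients do not depend on the ordering of the eigenvalues, and I need to treat the repeated-eigenvalue case separately. It is also essential that $\Obb_2$ contains reflections; this is what kills the off-diagonal entries, and it would fail for the group of proper rotations alone.
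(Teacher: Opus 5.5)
Your proof is correct and complete. It is essentially the standard argument behind the result the paper cites without proof \cite[Section~37]{gurtin1982}, specialized to two dimensions: an orthogonal map fixing an eigenvector shows that $\Cc(\bA)$ is diagonal in the eigenbasis of $\bA$, the coefficients come from interpolating the eigenvalues, and they are then shown to depend only on the invariants. Your remark on reflections is also accurate, and it marks exactly where the planar case departs from Gurtin's three-dimensional setting: there $2\boldsymbol{e}\otimes\boldsymbol{e}-\bI$ is a rotation by $\pi$, but in $\RR^2$ it has determinant $-1$, and with proper rotations alone the extra term $\boldsymbol{\mathrm{J}}\dev\bA$ (with $\boldsymbol{\mathrm{J}}$ the rotation by $\pi/2$) would be admissible, so the statement genuinely needs all of $\Obb_2$.
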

A proof of \cref{thm:isotropy} can be found in \cite[Section 37]{gurtin1982}. To derive expression \eqref{eq:representation_of_C}, we first note that, for a fixed $\bA\in\Sbb_2$, the map from $\iota_{\bA}$ to $(\tr{\bA},|\dev{\bA}|)$ is bijective from $\RR^2$ into $\RR\times\RRp$, with
\begin{align}
    |\bA|^2:= \frac{1}2\bA:\bA.
\end{align}
One can show that the mapping is bijective via the identity
\begin{align}
    |\dev{\bA}|^2 = \frac{1}{4}(\tr{\bA})^2 - \det{\bA}.
\end{align}
Therefore, we can think of $\psi_1$ and $\psi_2$ as functions of $(\tr{\bA},|\dev{\bA}|)$. By setting $\bA = \bD\bu$ for a velocity field $\bu$ and using the notation from \eqref{eq:invariants}, we define the bulk and shear viscosities as
\begin{align}
    \zeta(\alpha,\beta) := \frac{1}{\alpha}\psi_1(\alpha,\beta) + \frac{1}{2} \psi_2(\alpha,\beta) \quad \text{and} \quad \eta(\alpha,\beta) := \frac{1}{2}\psi_2(\alpha,\beta),
\end{align}
and the characterization \eqref{eq:isotropy} is established. The bulk viscosity $\zeta$, however, may not be well-defined when $\alpha = 0$. For $\zeta$ to be well-defined on all $\RR\times\RRp$, it is necessary that $\lim_{\alpha\to0} \psi_1(\alpha,\beta)/\alpha$ exists in $\RR$ for all $\beta$. Under this additional assumption, we can characterize any rotationally-invariant constitutive law as in \eqref{eq:isotropy}, in terms of a bulk and shear viscosity.

A constitutive law $\Cc=\Cc(\bD\bu)$ defined in terms of a rotationally-invariant matrix operator $\Cc:\Sbb_2\to\Sbb_2$ is said to be isotropic because the material exhibits no preferential directions of deformation. Anisotropy can be introduced without destroying frame-indifference by including additional fabric or structure tensors that track preferential directions of deformation \cite{oda1989, wilchinsky2006, li2009}. In this work, however, we restrict our attention to isotropic constitutive laws.

\end{document}